\newtheorem{thm}{Theorem}[section]
\newtheorem{lem}[thm]{Lemma}
\newtheorem{prop}[thm]{Proposition}
\theoremstyle{definition}
\newtheorem{defn}[thm]{Definition}
\theoremstyle{remark}
\newtheorem{rem}[thm]{Remark}
\numberwithin{equation}{section}
\newcommand{\set}[1]{\left\{#1\right\}}
\newcommand{\Real}{\mathbb R}
\newcommand{\Natural}{\mathbb N}
\newcommand{\B}{\mathcal{B}}
\newcommand{\such}{{\ | \ }}
\newcommand{\limn}{\lim_{n \to \infty}}
\newcommand{\dfn}{\, := \,}
\newcommand{\prob}{\mathbb{P}}
\newcommand{\plim}{\prob \text{-} \lim}
\newcommand{\plimn}{\plim_{n \to \infty}}
\newcommand{\qprob}{\mathbb{Q}}
\newcommand{\expec}{\mathbb{E}}
\newcommand{\Lb}{\mathbb{L}}
\newcommand{\lz}{\Lb^0}
\newcommand{\lzp}{\lz_+}
\newcommand{\lzpp}{\lz_{++}}
\newcommand{\lzcad}{c\`ad in probability}
\newcommand{\lzcadlag}{c\`adl\`ag in probability}
\newcommand{\pbdd}{bounded in probability}
\newcommand{\F}{\mathcal{F}}
\newcommand{\G}{\mathcal{G}}
\newcommand{\Hcal}{\mathcal{H}}
\newcommand{\cadlag}{c\`adl\`ag}
\newcommand{\num}{num\'eraire}
\newcommand{\X}{\mathcal{X}}
\newcommand{\nin}{n \in \Natural}
\newcommand{\tX}{\widetilde{X}}
\newcommand{\hX}{\widehat{X}}
\newcommand{\hY}{\widehat{Y}}
\newcommand{\oX}{\overline{\X}}
\newcommand{\ox}{\overline{X}}
\newcommand{\oy}{\overline{Y}}
\newcommand{\fhat}{\widehat{f}}
\newcommand{\hti}{\widetilde{h}}
\newcommand{\tf}{\widetilde{f}}
\newcommand{\pare}[1]{\left(#1\right)}
\newcommand{\bra}[1]{\left[#1\right]}
\newcommand{\dbra}[1]{[\kern-0.15em[ #1 ]\kern-0.15em]}
\newcommand{\dbraco}[1]{[\kern-0.15em[ #1 [\kern-0.15em[}
\newcommand{\dbraoc}[1]{]\kern-0.15em] #1 ]\kern-0.15em]}
\newcommand{\C}{\mathcal{C}}
\newcommand{\bF}{\mathbf{F}}
\newcommand{\indic}{\mathbb{I}}
\newcommand{\absco}{{<\kern-0.53em<}}
\newcommand{\oC}{\overline{\C}}
\newcommand{\dda}{\downarrow \downarrow}
\newcommand{\dua}{\uparrow \uparrow}
\newcommand{\otcT}{]t, T]}
\begin{document}

\title[Generalized supermartingale deflators under limited information]{Generalized supermartingale deflators under limited information}%
\author{Constantinos Kardaras}%
\address{Constantinos Kardaras, Mathematics and Statistics Department, Boston University, 111 Cummington Street, Boston, MA 02215, USA.}%
\email{kardaras@bu.edu}%

\thanks{The author acknowledges partial support by the National Science Foundation under award number DMS-0908461. Furthermore, the author would like to express his gratitude to two anonymous referees and the involved Associate Editor of ``Mathematical Finance'' for very valuable input that improved the presentation of the paper.
}%
\subjclass[2000]{91B70; 60G48}
\keywords{Limited information; generalized supermartingales; boundedness in probability; arbitrages of the first kind; fundamental theorem of asset pricing}%

\date{\today}%
\begin{abstract}
We undertake a study of markets from the perspective of a financial agent with limited access to information. The set of wealth processes available to the agent is structured with reasonable economic properties, instead of the usual practice of taking it to consist of stochastic integrals against a semimartingale integrator. We obtain the equivalence of the boundedness in probability of the set of terminal wealth outcomes (which in turn is equivalent to the weak market viability condition of absence of arbitrage of the first kind) with the existence of at least one strictly positive deflator that makes the deflated wealth processes have a generalized supermartingale property. 
\end{abstract}

\maketitle

\setcounter{section}{-1}

\section{Introduction}

An almost universal assumption in the literature of financial mathematics is that prices of traded assets, and as a byproduct wealth processes resulting from trading, are directly observable from an acting agent in the market. In mathematical terminology, one postulates that wealth processes are adapted with respect to the agent's filtration. In practice, however,  it is not always reasonable to assume the agent's information flow is large enough to satisfy the previous requirement. This can model, for example, cases where information arrives to the agent with a delay, or in limited form. Additionally, it can model circumstances where there is lag between the decisions of the agent and their implementation; in that case, prices at the moment when the act is implemented are unknown at the moment when the decision is made.

The purpose of this work is to study market viability in scenarios like the ones described above. All wealth processes available to an agent with some fixed initial capital are modeled via an abstract set $\X$. The agent possesses some information stream under which the wealth processes are not necessarily adapted. The aforementioned set $\X$ is endowed with a reasonable economical structure, but it is \emph{not} assumed to be generated by results of integrals against semimartingales. (To begin with, such an assumption would not make sense in our ``limited information'' set-up. Furthermore, the freedom we are allowing in the definition of a wealth-process set naturally allows for situations where an infinite number of underlying assets are available for trading, as is for example the case in the theoretical modeling of bond markets.) The main result of this paper establishes the equivalence between the boundedness in probability of $\set{X_T \such X \in \X}$, where $T$ denotes a finite time-horizon, and the existence of at least one strictly positive process $Y$ such that all deflated processes $Y X$, where $X$ ranges in $\X$, have some ``generalized supermartingale'' property under the agent's filtration. Boundedness in probability of the set of terminal wealth processes has been discussed in great detail in \cite{MR1304434}, \cite{MR1647282} and \cite{MR2335830}; it is actually equivalent to a weak \emph{market viability} condition, namely \emph{absence of arbitrages of the first kind}, discussed in \cite{Kard10}. As it turns out, the correct description of a strictly positive deflator $Y$ to be used in the aforementioned equivalence involves a ``multiplicative'' generalization of supermartingales. In the full-information case, meaning that wealth processes in $\X$ are observable to the agent, the latter generalization exactly reduces to the familiar supermartingale property. 

Literature dealing with viability of markets where agents have limited information is scarce. To the best of the author's knowledge, a treatment of this problem in a \emph{continuous}-time setting has not appeared before. For discrete-time models, it was shown in in \cite{MR2276898} that the classical ``No Arbitrage'' condition is equivalent to the existence of a probability $\qprob$, equivalent to $\prob$, such that the optional projection (on the agent's filtration) under $\qprob$  of the discounted asset-prices  are $\qprob$-martingales. In the latter paper, the authors argue that questions regarding viability for continuous-time models can be posed even for processes that are not semimartingales. It is then not hard to understand why such line of research does not seem very promising in continuous time: all the rich machinery of semimartingale theory cannot be directly used, since the processes involved might fail to be adapted with respect to the agent's filtration. Indeed, a portion of the work carried out in this paper deals with establishing appropriate generalizations of well-known results, such as Doob's nonnegative supermartingale convergence theorem, in order to achieve the goal of proving the main result. In this sense, this paper also contributes to the general theory of stochastic processes.

\smallskip

The structure of the paper is simple: in Section \ref{sec: results} the result is stated, while Section \ref{sec: proof of main thm} contains its somewhat lengthy and technical proof.

\section{The Result} \label{sec: results}

\subsection{Probabilistic notation and definitions}

All stochastic elements in the sequel are defined on a probability space $(\Omega,  \G, \, \prob)$, where $\G$ is a $\sigma$-field over $\Omega$ and $\prob$ is a probability on $(\Omega, \G)$. Fix some $T \in \Real_{+}$ that models the end of financial activity. We consider a right-continuous filtration $\bF = (\F_t)_{t \in [0, T]}$ such that $\F_t \subseteq \G$ holds for all $t \in [0, T]$ and $\F_0$ is trivial modulo $\prob$. We assume that $\G$ is $\prob$-complete and all $\prob$-null sets of $\G$ are contained in $\F_0$; in other words, the stochastic basis satisfies the usual hypotheses.

We stress that the stochastic processes that will be considered in what follows are \emph{not} assumed to be $\bF$-adapted; by a ``stochastic process $X$'' we simply mean a collection $(X_t)_{t \in [0, T]}$ such that, for each $t \in [0, T]$, $X_t$ is a $\G$-measurable random variable.

By $\lzp$ we shall be denoting the set of all equivalence classes (modulo $\prob$) of nonnegative, $\G$-measurable random variables, endowed with the metric topology of convergence in $\prob$-measure. (Note that we shall not differentiate between random variables and the equivalence class in $\lzp$ they generate.) Furthermore, we shall use $\lzpp$ to denote the set of $f \in \lzp$ such that $\prob[f > 0] = 1$.

\begin{defn} \label{dfn: processes}
A stochastic process $X$ will be called \textsl{nonnegative} if $X_t \in \lzp$ for all $t \in [0, T]$; $X$ will be called \textsl{strictly positive} if $X_t \in \lzpp$ for all $t \in [0, T]$. A nonnegative stochastic process $X$ will be called \textsl{\lzcad} if the mapping $[0, T] \ni t \mapsto X_t \in \lzp$ is right-continuous. Further, a nonnegative process $X$ will be called \textsl{\lzcadlag} if the mapping $[0, T] \ni t \mapsto X_t \in \lzp$ is right-continuous and admits left-hand limits.
\end{defn}

The notions of process-continuity in Definition \ref{dfn: processes} are weaker than the corresponding ``c\`ad'' \ and ``\cadlag'' \ notions referring to the paths of a process.

\subsection{Generalized supermartingales} \label{subsuc: gen supermarts}

We now introduce a ``supermartingale'' property with respect to $\bF$ for nonnegative processes, when these processes are not necessarily $\bF$-adapted.

\begin{defn} \label{dfn: gen supermarts}
A \emph{nonnegative} stochastic process $Z$ will be called a \textsl{generalized supermartingale with respect to $\bF$} if
$\expec[Z_t / Z_s \such \F_s] \leq 1$ holds whenever $s \in [0, T]$ and $t \in [s, T]$.
\end{defn}

In the context of Definition \ref{dfn: gen supermarts}, the event $\set{Z_s = 0} \in \G$ might not be $\prob$-null for $s \in [0, T]$; therefore, one should be careful in defining $Z_t / Z_s$ on $\set{Z_s = 0}$. We use the following conventions: on $\set{Z_s = 0, \, Z_t > 0}$ we set  $Z_t / Z_s = \infty$, while on $\set{Z_s = 0, \, Z_t = 0}$ we set  $Z_t / Z_s = 1$. In particular, if $Z$ is a nonnegative generalized supermartingale with respect to $\bF$, then $\prob[Z_s = 0, \, Z_t > 0] = 0$ holds whenever $s \in [0, T]$ and $t \in [s, T]$.


If a nonnegative process $Z$ is $\bF$-adapted, it is straightforward to check (using our division conventions) that $Z$ is a generalized supermartingale with respect to $\bF$ if and only if $\expec[Z_t \such \F_s] \leq Z_s$ holds whenever $s \in [0, T]$ and $t \in [s, T]$; in other words, we retrieve the classical definition of nonnegative supermartingales.

\subsection{The equivalence result}

We are ready to state the main result of the paper, which connects the boundedness in probability of the terminal values of a set of wealth processes to the existence of a strictly positive generalized supermartingale deflator. Theorem \ref{thm: main dynamic} below, whose proof is given in Section \ref{sec: proof of main thm}, refines and widens the scope of previous findings obtained in the ``full information'' case, like the ones in \cite{MR2284490} and \cite{MR2335830}.

\begin{thm} \label{thm: main dynamic}
Let $\X$ be a set of stochastic processes such that:
\begin{enumerate}
\item[(a)] Each $X \in \X$ is nonnegative and \lzcad, and satisfies $X_0 = 1$.
\item[(b)] There exists a strictly positive process $\ox \in \X$.
\item[(c)] $\X$ is \textsl{convex}: $\pare{(1 - \alpha) X + \alpha X'} \in \X$ holds for any $X \in \X$, $X' \in \X$, and $\alpha \in [0,1]$.
\item[(d)] $\X$ has the following \textsl{switching property}: for all $\tau \in [0, T]$ and $A \in \F_\tau$, all $X \in \X$, and all strictly positive $X' \in \X$, the process
\[
\indic_{\Omega \setminus A} X_\cdot + \indic_A \frac{X'_{\tau \vee \cdot}}{X'_\tau} X_{\tau \wedge \cdot} = \left\{
           \begin{array}{ll}
             X_t (\omega), & \hbox{if $t \in [0, \tau[$, or $\omega \notin A$;} \\
             \pare{X_{\tau} (\omega) / X_{\tau}' (\omega)} X_t'(\omega) , & \hbox{if $t \in [\tau, T]$ and $\omega \in A$}
           \end{array}
         \right.
\]
is also an element of $\X$.
\end{enumerate}
Then, the following statements are equivalent:
\begin{enumerate}
  \item The set $\set{X_T \such X \in \X}$ is bounded in probability: $\lim_{\ell \to \infty} \sup_{X \in \X} \prob[X_T > \ell] = 0$.
  \item There exists a \lzcadlag \ and strictly positive process $Y$ such that $Y X$ is a generalized supermartingale with respect to $\bF$ for all $X \in \X$.
\end{enumerate}
Under any of the above equivalent conditions, each $X \in \X$ is \lzcadlag.

\smallskip

If $\X$ is such that \emph{(a)} through \emph{(d)} are satisfied and furthermore $\set{X_T \such X \in \X}$ is closed in probability, conditions \emph{(1)} and \emph{(2)} above are also equivalent to:
\begin{enumerate}
  \item[($3$)] There exists a strictly positive wealth process $\hX \in \X$ such that $X / \hX$ is a generalized supermartingale with respect to $\bF$ for all $X \in \X$.
\end{enumerate}
\end{thm}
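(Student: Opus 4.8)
The plan is to establish the equivalence through the implications $(2)\Rightarrow(1)$ and $(1)\Rightarrow(2)$ in general, and, under the extra closedness assumption, to append $(3)\Rightarrow(2)$, which is immediate, together with $(1)\Rightarrow(3)$. I would first dispose of the routine direction $(2)\Rightarrow(1)$. Given a strictly positive \lzcadlag{} $Y$ making every $YX$ a generalized supermartingale, normalize so that $Y_0=1$ (permissible, since the property $\expec\bra{(Y_tX_t)/(Y_sX_s)\such\F_s}\le1$ is unchanged upon replacing $Y$ by $Y/Y_0$). Taking $s=0$, where $\F_0$ is trivial and $X_0=1$, yields $\expec[Y_TX_T]\le1$ uniformly over $X\in\X$. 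Since $Y_T\in\lzpp$, splitting $\set{X_T>\ell}$ according to whether $Y_T$ lies above or below a small threshold $\delta$ and applying Markov's inequality on the former event gives $\sup_{X\in\X}\prob[X_T>\ell]\le\prob[Y_T\le\delta]+1/(\delta\ell)$, which is made small by first choosing $\delta$ and then $\ell$; hence $(1)$. The same one-line estimate also gives $(3)\Rightarrow(2)$, by setting $Y=1/\hX$.

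The substance is $(1)\Rightarrow(2)$. I would discretize, working along the dyadic partitions $\parti_n$ of $[0,T]$ and constructing, for each $n$, a strictly positive process $Y^n$ indexed by $\parti_n$ for which $Y^nX$ is a generalized supermartingale along the grid. The workhorse of the finite-grid construction is a static separation principle --- a convex, bounded-in-probability subset of $\lzp$ is supported by a strictly positive functional $g\in\lzpp$ with $\sup_f\expec[gf]<\infty$ --- together with its conditional counterpart relative to each $\F_s$, $s\in\parti_n$. The crucial structural obstacle, created precisely by the absence of $\bF$-adaptedness, is that the inequalities $\expec\bra{(Y_tX_t)/(Y_sX_s)\such\F_s}\le1$ do \emph{not} reduce to one-step inequalities: the deflated increments fail to be $\F$-measurable, the tower property cannot be iterated, and so all pairs $s\le t$ of grid times must be controlled simultaneously. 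This is where hypothesis (d) is indispensable: inserting, at a grid time $\tau$ and an event $A\in\F_\tau$, a replacement of an arbitrary $X$ by the strictly positive reference $\ox$ after $\tau$, and feeding the resulting process (which lies in $\X$ by (d), its relevant mixtures being available through convexity (c)) into the supporting functional, is what promotes the single time-$0$ budget constraint into the full family of genuinely conditional constraints, that is, into the generalized supermartingale property on $\parti_n$.

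The heart of the proof --- and the step I expect to be the \textbf{main obstacle} --- is the passage from the grids to continuous time. After normalizing $Y^n_0=1$, one must extract a limit $Y$ of the $Y^n$ along the dense dyadic times and show that it is strictly positive, \lzcadlag, and such that $YX$ is a generalized supermartingale for every $X\in\X$. Because none of the processes is $\bF$-adapted, the classical nonnegative-supermartingale convergence and regularization theorems are simply unavailable. The plan is therefore to first prove a bespoke generalized (non-adapted) analogue of Doob's convergence theorem, and to use it both to regularize the deflated processes --- obtaining right-continuity in probability and existence of left limits, hence the \lzcadlag{} property of each $X$ asserted at the end of the theorem --- and to ensure that the limiting $Y$ inherits strict positivity and \lzcadlag{} regularity. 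Passing the grid inequalities to the limit via right-continuity in probability and Fatou-type arguments then delivers $(2)$.

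Finally, under the hypothesis that $\set{X_T\such X\in\X}$ is closed in probability, I would upgrade the supporting functional to an attained optimizer in order to obtain $(1)\Rightarrow(3)$. Since this terminal set is now convex, closed, and bounded in probability, the num\'eraire element --- characterized by $\expec[X_T/\hX_T]\le1$ for all $X\in\X$, equivalently as the maximizer of a suitable strictly concave utility of $X_T$ --- is attained at some $\hX_T$ arising from an actual $\hX\in\X$, strict positivity being forced by (b) and maximality. The switching property then propagates horizon-optimality of $\hX$ to optimality from every later starting time, which is exactly the assertion that $X/\hX$ is a generalized supermartingale for all $X\in\X$; setting $Y=1/\hX$ recovers $(2)$ and closes the loop $(1)\Leftrightarrow(2)\Leftrightarrow(3)$. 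The subtlety here is once more the non-adaptedness in the propagation-of-optimality step, which is again handled through hypothesis (d).
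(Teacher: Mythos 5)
Your high-level architecture matches the paper's: the easy direction via the Markov estimate, the hard direction via a static num\'eraire-type result plus the switching property (d) to upgrade unconditional budget constraints into conditional ones, a generalized non-adapted Doob regularization to obtain the \lzcadlag{} deflator, and an attained optimizer on the closed terminal set for (3). (One small slip: setting $Y = 1/\hX$ does not "immediately" give (2), since $1/\hX$ need not be \lzcadlag; what your estimate actually proves is $(3)\Rightarrow(1)$, and the loop closes through $(1)\Rightarrow(2)$ --- which is essentially how the paper handles it too.)

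However, there are two genuine gaps, both at the places you yourself flag as hard. First, the dyadic-grid construction creates a consistency problem you do not address: you build, for each $n$, some deflator $Y^n$ on $\parti_n$ from ``a supporting functional $g\in\lzpp$'', but such supporting functionals are far from unique, so there is no reason the $Y^n$ should agree (or even converge) at common grid points, and $\lzp$ offers no compactness from which to ``extract a limit $Y$''. The paper avoids the grid entirely: for every $t\in[0,T]$ it takes $\fhat_t$ to be the \num{} of the closed convex hull of $\set{X_t \such X \in \X}$ --- an attained and, by Jensen's inequality, \emph{unique} optimizer --- so the family $(1/\fhat_t)_{t\in[0,T]}$ is defined intrinsically with no consistency issue; the switching property together with a ``sandwich'' convergence lemma (if $\expec[g^n]\le 1$, $\expec[h^n]\le 1$ and $\plimn (g^n h^n)=1$, then both sequences tend to $1$ in probability) then shows that any sequence $\xi^n\in\X$ with $\plimn \xi^n_T=\fhat_T$ also satisfies $\plimn \xi^n_t=\fhat_t$ for every $t$, which is what yields the conditional inequalities and, for (3), the propagation of terminal optimality of $\hX$ to all intermediate times. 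Uniqueness and attainment of the \num{} are essential here; a generic separating $g$ would not do. Second, the ``bespoke generalized Doob theorem'' cannot simply be posited: it is the technical core of the paper. It is proved by showing that the \num s of a monotone (increasing or decreasing) family of convex, closed, bounded-in-probability subsets of $\lzp$ converge in probability to the \num{} of the limit set, again via the sandwich lemma; the decreasing case moreover genuinely requires the limit set to meet $\lzpp$ (the paper exhibits a counterexample otherwise), a hypothesis your sketch never secures. Until these two steps are supplied, the proposal is a plausible roadmap rather than a proof.
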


\subsection{Remarks on Theorem \ref{thm: main dynamic}}

We continue by discussing some topics that are related to the statement of Theorem \ref{thm: main dynamic}.

\subsubsection{Financial interpretation of the set $\X$} \label{subsubsec: financial interpretation}
A set $\X$ that satisfies (a) through (d) in the statement of Theorem \ref{thm: main dynamic} can be thought as modeling the wealth processes that are available to some agent in a financial market. Condition  (a) states that the initial capital of the agent is normalized to unit, and that wealth processes satisfy an extremely mild ``regularity'' requirement. Condition (b) states that one can find a wealth process in $\ox \in \X$ that can be used as a ``baseline'' to denominate all other wealths --- for this reason, it has to be strictly positive. (Usually, $\ox$ is taken to be the wealth process generated by the bank account.) Note that if we choose to actually denominate all wealths in units of $\ox$, in other words if we replace $\X$ by $\oX \dfn \set{X /\ox \such X \in \X}$, then properties (a) through (d) of Theorem \ref{thm: main dynamic} still hold for the new wealth-process set $\oX$ with (b) actually strengthened to $1 \in \oX$. Further, note that a process $Y$ satisfies statement (2) of Theorem \ref{thm: main dynamic} if and only if the process $\oy \dfn Y \ox$ satisfies statement (2) of Theorem \ref{thm: main dynamic} with $\oX$ replacing $\X$. This simple ``change of \num'' trick helps reduce the proof of Theorem \ref{thm: main dynamic} to the case where property (b) is strengthened to $1 \in \X$. Moving ahead, it is intuitively clear why the convexity property (c) should hold: if an agent can invest in two wealth processes $X \in \X$ and $X' \in \X$, the agent should be free to allocate at time $t = 0$ a fraction $\alpha \in [0,1]$ of the unit initial capital to wealth $X'$ and the remaining fraction to the wealth $X$. The switching property (d) has the following economic interpretation: if an agent can invest in two wealth streams $X \in \X$ and $X' \in \X$, where the latter process is assumed strictly positive, we should then allow for the possibility that, starting with the wealth process $X$, at time $\tau$ the agent decides to either switch to the wealth process $X'$, which happens on $A \in \F_\tau$, or keep investing according to $X$, on the event $\Omega \setminus A$. Note that it is exactly condition (d) which reflects that the information flow available to the agent is $\bF$.

\subsubsection{Market viability}

An \textsl{arbitrage of the first kind} in the market is a random variable $\xi \in \lzp$ with $\prob[\xi > 0] > 0$ such that for all $x > 0$ there exists $X \in \X$ (which may depend on $x$) which satisfies $\prob [x X_T \geq \xi] = 1$. We shall say that condition NA$_1$ holds if there is no arbitrage of the first kind in the market.
In par with the financial interpretation of the set $\X$ given in \S \ref{subsubsec: financial interpretation} above, the set $x \X = \set{x X \such X \in \X}$ corresponds to all attainable wealth processes starting from initial capital $x > 0$. Therefore, in words, condition NA$_1$ fails if and only if no matter how minute the initial capital is, an investor can invest in a way that certainly results at time $T$ in at least a predetermined non-zero (on a set of strictly positive probability) amount.  According to \cite[Proposition 1]{Kard10}, boundedness in probability of $\set{X_T \such X \in \X}$ is equivalent to condition NA$_1$. (Although the set-up in \cite{Kard10} is different, the proof of Proposition 1 from the latter paper can be copied \emph{mutatis mutandis} for the present situation.) The aforementioned equivalence clarifies the financial relevance of Theorem \ref{thm: main dynamic}.

\subsubsection{The num\'eraire in $\X$} \label{subsubsec: numeraire}

When $\set{X_T \such X \in \X}$ is bounded and closed in probability, it is natural to call a process $\hX$ that satisfies condition $(3)$ of Theorem \ref{thm: main dynamic} above \textsl{the \num} in $\X$, which generalizes the definition for the full-information case (see \cite{LONG}, \cite{MR1849424}, \cite{MR2335830}). Note that the \num \ in $\X$, if it exists, is unique up to modification. Indeed, suppose that both strictly positive processes $\hX \in \X$ and $\hX' \in \X$ are such that $\hX / \hX'$ and $\hX' / \hX$ are generalized supermartingales with respect to $\bF$. In particular, $\expec [\hX'_t / \hX_t] \leq 1$ and $\expec [\hX_t / \hX'_t] \leq 1$ should hold simultaneously for all $t \in [0, T]$. Jensen's inequality gives that $\prob [\hX_t = \hX'_t] = 1$ for all $t \in [0, T]$.


\subsubsection{Adaptedness of the strictly positive generalized supermartingale deflator}

As a careful inspection of the proof of Theorem \ref{thm: main dynamic} in Section \ref{sec: proof of main thm} reveals, the strictly positive generalized supermartingale deflator $Y$ that satisfies condition (2) of Theorem \ref{thm: main dynamic} can be chosen to be adapted with respect to the usual augmentation of the filtration that makes all the wealth processes in $\X$ adapted. In particular, if the processes in $\X$ are $\bF$-adapted, $Y$ can be chosen to be $\bF$-adapted.

\section{Proof of Theorem \ref{thm: main dynamic}} \label{sec: proof of main thm}

We start by mentioning (without proof) a special case of \cite[Lemma A1.1]{MR1304434}, which will be used constantly throughout the proof of Theorem \ref{thm: main dynamic}. Recall that a set $\B \subseteq \lzp$ is called \pbdd \ if $\lim_{\ell \to \infty} \sup_{f \in \B} \prob[f > \ell] = 0$.

\begin{lem} \label{lem: A1.1}
Let $(f^n)_{\nin}$ be an $\lzp$-valued sequence and define $\C^n$ as the convex hull of the set $\set{f^n, f^{n+1}, \ldots}$, for each $\nin$. Assume that $\C^1$ is bounded in probability. Then, there exists $g \in \lzp$ and a sequence $(g^n)_{\nin}$ such that $g^n \in \C^n$ for all $\nin$ and $\plimn g^n = g$.
\end{lem}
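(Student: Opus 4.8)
The plan is to prove the lemma by a ``uniform convexity'' (variational) argument built around the bounded, strictly convex function $x \mapsto \e^{-x}$ on $[0, \infty)$, which converts the extraction of convergent convex combinations into the statement that a suitable near-minimizing sequence is Cauchy in probability. First I would record that the convex hulls are nested, $\C^{n+1} \subseteq \C^n$, and that since $\C^1$ is bounded in probability and $\C^n \subseteq \C^1$, every $\C^n$ is bounded in probability. Because $0 \le \e^{-h} \le 1$ for $h \in \lzp$, the quantities $\beta_n := \inf_{h \in \C^n} \expec[\e^{-h}]$ are well defined, lie in $[0, 1]$, and are nondecreasing in $n$ (the infimum is over a shrinking set); set $\beta := \limn \beta_n \in [0, 1]$. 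For each $\nin$ I would pick a near-minimizer $g^n \in \C^n$ with $\expec[\e^{-g^n}] \le \beta_n + 2^{-n} \le \beta + 2^{-n}$. This sequence already satisfies $g^n \in \C^n$, so it only remains to show that it converges in probability.

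The heart of the argument is the elementary pointwise identity
\[
\tfrac12 \pare{\e^{-a} + \e^{-b}} - \e^{-(a+b)/2} = \e^{-(a+b)/2} \pare{\cosh\!\big(\tfrac{a-b}{2}\big) - 1} \ge 0, \qquad a, b \ge 0.
\]
For $m \le n$ we have $g^m \in \C^m$ and $g^n \in \C^n \subseteq \C^m$, so by convexity the midpoint $(g^m + g^n)/2$ lies in $\C^m$, whence $\expec[\e^{-(g^m + g^n)/2}] \ge \beta_m$. Taking expectations in the identity with $a = g^m$, $b = g^n$, and inserting the near-minimality bounds together with this midpoint estimate, yields
\[
0 \le \expec\!\bra{\e^{-(g^m + g^n)/2} \pare{\cosh\!\big(\tfrac{g^m - g^n}{2}\big) - 1}} \le \tfrac12 \pare{\beta_n - \beta_m} + 2^{-m},
\]
where the left-hand expectation equals $\tfrac12(\expec[\e^{-g^m}] + \expec[\e^{-g^n}]) - \expec[\e^{-(g^m + g^n)/2}]$. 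Since $\beta_m, \beta_n \to \beta$, the right-hand side tends to $0$ as $m, n \to \infty$, so the nonnegative ``convexity defect'' converges to $0$ in $L^1(\prob)$.

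Finally I would convert this $L^1$-smallness of the defect into convergence of $(g^n)_\nin$ in probability, and this step, where the boundedness-in-probability hypothesis becomes indispensable, is the one I expect to be the main obstacle. Suppose, for contradiction, that $(g^n)_\nin$ is not Cauchy in probability: there exist $\eps, \delta > 0$ and indices $m_k \le n_k \to \infty$ with $\prob[\abs{g^{m_k} - g^{n_k}} > \eps] > \delta$. Using that $\C^1$ is bounded in probability, fix $\ell$ with $\sup_{h \in \C^1} \prob[h > \ell] < \delta/4$, so that the event $B_k := \set{\abs{g^{m_k} - g^{n_k}} > \eps} \cap \set{g^{m_k} \le \ell} \cap \set{g^{n_k} \le \ell}$ satisfies $\prob[B_k] > \delta/2$; on $B_k$ the defect is bounded below by $\e^{-\ell} \pare{\cosh(\eps/2) - 1} > 0$, contradicting its convergence to $0$ in $L^1(\prob)$. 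Hence $(g^n)_\nin$ is Cauchy in probability, and by completeness of $\lzp$ under convergence in probability it converges to some $g \in \lzp$ (nonnegative as a limit of nonnegative elements, and $\prob$-a.s.\ finite as an $\lzp$-limit), giving $\plimn g^n = g$ with $g^n \in \C^n$, as required. I would close by observing that without boundedness in probability the factor $\e^{-(g^m + g^n)/2}$ can degenerate (take $f^n \equiv n$), so the defect may vanish while $g^m - g^n$ does not; this is precisely the obstruction that the hypothesis removes.
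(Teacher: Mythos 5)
Your proof is correct, and it fills a gap the paper deliberately leaves open: Lemma \ref{lem: A1.1} is stated there \emph{without proof}, as a special case of Lemma A1.1 in \cite{MR1304434}, so there is no in-paper argument to compare against; the natural benchmark is the original Delbaen--Schachermayer proof. Your route is essentially theirs --- optimize a bounded, strictly convex functional over the nested convex hulls ($h \mapsto \expec[\e^{-h}]$ is the mirror image of their concave $u(x) = 1 - \e^{-x}$), take near-optimizers $g^n \in \C^n$, and use the midpoint $(g^m + g^n)/2 \in \C^m$ to show the sequence is Cauchy in probability --- but your execution is cleaner in one respect: the identity $\tfrac12(\e^{-a}+\e^{-b}) - \e^{-(a+b)/2} = \e^{-(a+b)/2}\pare{\cosh\pare{\tfrac{a-b}{2}}-1}$ makes the strict-convexity defect quantitative, so the truncation at level $\ell$ supplied by boundedness in probability immediately gives the lower bound $\e^{-\ell}\pare{\cosh(\eps/2)-1}$ on an event of probability at least $\delta/2$, yielding the contradiction directly, where the classical argument appeals to uniform strict concavity on bounded sets more qualitatively. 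The delicate points all check out: the defect bound $\tfrac12(\beta_n - \beta_m) + 2^{-m}$ is right (using $2^{-n} \leq 2^{-m}$ for $m \leq n$); the negation of Cauchy-in-probability does allow both indices $m_k \leq n_k$ to tend to infinity, so $2^{-m_k}$ and $\beta_{n_k}-\beta_{m_k}$ both vanish; and completeness of $\lzp$ under the metric of convergence in probability delivers $g \in \lzp$. Note finally that your conclusion matches the paper's statement exactly: the lemma asks only for $\plimn g^n = g$, not the almost-sure convergence asserted in the original version in \cite{MR1304434} (which would require a further extraction of convex combinations along a subsequence), so nothing is missing, and your closing example $f^n \equiv n$ correctly identifies why the boundedness hypothesis is what keeps the factor $\e^{-(g^m+g^n)/2}$ from degenerating.
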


We now state and prove a ``static'' version of Theorem \ref{thm: main dynamic}.

\begin{thm} \label{thm: main static}
Let $\C \subseteq \Lb^0_+$ with $\C \cap \lz_{++} \neq \emptyset$. Assume that $\C$ is convex and closed in probability. Then, the following statements are equivalent:
\begin{enumerate}
  \item $\C$ is \pbdd.
  \item There exists $g \in \lzpp$ such that $\expec[g f] \leq 1$ holds for all $f \in \C$.
  \item There exists $\fhat \in \C \cap \lz_{++}$ such that $\expec[f / \fhat] \leq 1$ holds for all $f \in \C$.
\end{enumerate}
\end{thm}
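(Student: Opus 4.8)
The plan is to prove the cycle $(3) \Rightarrow (2) \Rightarrow (1) \Rightarrow (3)$, which concentrates essentially all of the work in the last implication. The implication $(3) \Rightarrow (2)$ is immediate: given $\fhat \in \C \cap \lzpp$ as in $(3)$, the element $g \dfn 1 / \fhat$ is strictly positive and finite $\prob$-a.s., hence lies in $\lzpp$, and $\expec[g f] = \expec[f / \fhat] \le 1$ for all $f \in \C$. For $(2) \Rightarrow (1)$ I would run a Chebyshev-type estimate exploiting the strict positivity of $g$: fixing $\eps > 0$, since $\set{g < \delta} \downarrow \set{g = 0}$ as $\delta \downarrow 0$ and $\prob[g = 0] = 0$, I can pick $\delta > 0$ with $\prob[g < \delta] < \eps$; then for every $f \in \C$ and $\ell > 0$,
\[
\prob[f > \ell] \le \prob[g < \delta] + \prob[g f > \delta \ell] \le \eps + \frac{\expec[g f]}{\delta \ell} \le \eps + \frac{1}{\delta \ell},
\]
so that $\limsup_{\ell \to \infty} \sup_{f \in \C} \prob[f > \ell] \le \eps$; letting $\eps \downarrow 0$ yields boundedness in probability.

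The heart of the matter is $(1) \Rightarrow (3)$, where the numeraire-type element $\fhat$ must be produced. The guiding idea is that $\fhat$ should be the log-optimal element of $\C$, because the first-order condition for maximizing $\expec[\log f]$ over the convex set $\C$ at $\fhat$ is precisely $(3)$: differentiating along the segment $\alpha \mapsto (1 - \alpha) \fhat + \alpha f$ formally gives $\expec[(f - \fhat)/\fhat] \le 0$, i.e.\ $\expec[f/\fhat] \le 1$. To produce $\fhat$ I would take a maximizing sequence $(f^n)_{\nin}$ for $\sup_{f \in \C} \expec[\log f]$, apply Lemma \ref{lem: A1.1} (legitimate since $\C$, being bounded in probability, has convex hull bounded in probability) to obtain $g^n$ in the convex hull of $\set{f^n, f^{n+1}, \dots} \subseteq \C$ with $\plimn g^n = \fhat$, and invoke closedness of $\C$ in probability to conclude $\fhat \in \C$; concavity of $\log$ keeps the convex combinations asymptotically optimal. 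I would then argue $\fhat \in \lzpp$: were $\prob[\fhat = 0] > 0$, mixing $\fhat$ with the strictly positive element of $\C \cap \lzpp$ would strictly raise the (relative) log-utility, by virtue of the infinite marginal utility of $\log$ at the origin, contradicting optimality.

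The main obstacle is that $\expec[\log f]$ need not be finite on $\C$ — boundedness in probability controls the tails only weakly, and the supremum may well equal $+\infty$ — so the naive recipe ``take a maximizer and differentiate'' breaks down on two counts: one cannot pass to the limit along $(g^n)$ by dominated convergence (no upper-tail uniform integrability is available), and one cannot subtract infinite log-utilities to read off the variational inequality. Both difficulties I would circumvent by working exclusively with the relative increments
\[
\frac{1}{\alpha} \bigl[ \log\bigl( \fhat + \alpha (f - \fhat) \bigr) - \log \fhat \bigr] = \frac{1}{\alpha} \log\bigl( 1 + \alpha (f/\fhat - 1) \bigr),
\]
which by concavity increase to $f/\fhat - 1$ as $\alpha \downarrow 0$; the relative optimality $\expec[\log(f/\fhat)] \le 0$ then feeds a monotone-convergence argument that delivers $\expec[f/\fhat - 1] \le 0$ without ever manipulating the possibly-infinite quantity $\expec[\log \fhat]$. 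The delicate point is to justify both the existence of the optimizer and this limiting step while managing the negative part of these increments; this is where I expect the genuine technical effort to lie.

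As a more robust route to securing the existence step, I would, if the direct log-optimization proves awkward, first obtain $(2)$ by maximizing a \emph{bounded}, strictly increasing, strictly concave utility (for instance $U(x) = x/(1+x)$) over $\C$ — here the value is automatically finite, existence of a maximizer follows from Lemma \ref{lem: A1.1} together with closedness, and the first-order condition $\expec[U'(\fhat)(f - \fhat)] \le 0$ (proved by the same concavity-and-monotone-convergence device, with $U'$ bounded) yields, after normalization, a strictly positive $g \propto U'(\fhat)$ satisfying $(2)$. One would then bootstrap from $(2)$ to the sharper numeraire relation $(3)$ using the closedness of $\C$. Either way, the substantive content is the construction and variational characterization of the strictly positive optimizer under the weak integrability afforded only by boundedness in probability.
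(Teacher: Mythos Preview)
Your overall strategy for $(1)\Rightarrow(3)$---log-optimization followed by the first-order condition---is the right one, and your monotone-convergence derivation of the first-order inequality from optimality is sound (the difference quotients $\alpha^{-1}\log\bigl(1+\alpha(f/\fhat-1)\bigr)$ are bounded below by a constant for $\alpha$ bounded away from $1$, so no integrability of $\log(f/\fhat)$ is actually needed). The gap lies in \emph{existence}. You correctly flag that $\sup_{f\in\C}\expec[\log f]$ may be $+\infty$, but ``working exclusively with relative increments'' only handles the first-order step once $\fhat$ is already in hand; it does not produce $\fhat$, and the ``relative optimality'' $\expec[\log(f/\fhat)]\le 0$ you invoke is itself the log-optimality you cannot establish. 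Your fallback route does give $(2)$ via a bounded utility, but the promised ``bootstrap from $(2)$ to $(3)$ using closedness'' is left entirely unexplained, and it is in fact no easier than $(1)\Rightarrow(3)$: having some $g\in\lzpp$ with $\expec[gf]\le 1$ for all $f\in\C$ gives an $\Lb^1$ bound on $g\C$, yet the log-problem on $g\C$ still lacks the upper-tail control (uniform integrability of $\log^+$) needed to pass to a limit along a maximizing sequence.

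The missing device is \emph{truncation}. After reducing (divide through by an element of $\C\cap\lzpp$) to $1\in\C$, and (pass to the solid hull, which remains convex, closed and bounded in probability) to $\C$ solid, solve the log-problem on $\C^n:=\{f\in\C:f\le n\}$. There $\log f\le\log n$ makes the value finite, and Lemma~\ref{lem: A1.1} together with the inverse Fatou lemma yields an optimizer $f_*^n$; your difference-quotient argument then gives $\expec[f/f_*^n]\le 1$ for all $f\in\C^n$. Now extract forward convex combinations $\fhat^n$ of $(f_*^k)_{k\ge n}$ converging in probability to some $\fhat\in\C$; Jensen's inequality for $x\mapsto 1/x$ plus Fatou pass the inequality to $\expec[f/\fhat]\le 1$ for every $f\in\bigcup_n\C^n$, and solidity together with monotone convergence extend it to all of $\C$.
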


\begin{proof} Implication $(3) \Rightarrow (2)$ trivially follows by setting $g \dfn 1 / \fhat$. Further, assume (2)  and fix $g \in \lzpp$ such that $\expec[g f] \leq 1$ holds for all $f \in \C$. For all $\ell \in \Real_+$ and $f \in \C$, $\ell \, \prob[f g > \ell] \leq \expec[f g] \leq 1$. Therefore, $\lim_{\ell \to \infty} \sup_{f \in \C} \prob[f g > \ell] \leq \limsup_{\ell \to \infty} (1 / \ell) = 0$, i.e., $\set{f g \such f \in \C}$ is bounded in probability. Since $g \in \Lb^0_{++}$, $\C$ is also bounded in probability, i.e., condition (1) holds.

\smallskip

We now discuss the more difficult implication $(1) \Rightarrow (3)$, which is the content of \cite[Theorem 1.1(4)]{Kar09}. Since $\C \cap \lzpp \neq \emptyset$, we may assume that $1 \in \C$. Indeed, otherwise, we consider $\widetilde{\C} \dfn (1 / g) \C$ for some $g \in \C \cap \Lb^0_{++}$. Then, $1 \in \widetilde{\C}$ and $\widetilde{\C}$ is still convex, closed and bounded in probability. Furthermore, if $\expec[f / \widetilde{f}] \leq 1$ holds for all $f \in \widetilde{\C}$, then, with $\fhat \dfn g \widetilde{f}$, $\expec[f / \fhat] \leq 1$ holds for all $f \in \C$. Therefore, in the sequel we assume that $1 \in \C$. We claim that we can further assume without loss of generality that $\C$ is solid. Indeed, let $\C'$ be the \emph{solid hull} of $\C$, i.e., $\C' \dfn \set{f \in \lz_{+} \such f \leq h \text{ holds for some } h \in \C}$. It is straightforward that $1 \in \C'$, as well as that $\C'$ is still convex and bounded in probability. It is also true that $\C'$ is still closed in probability. (To see the last fact, pick a $\C'$-valued sequence  $(f^n)_{n \in \Natural}$ that converges $\prob$-a.s. to $f \in \lz_{+}$. Let $(h^n)_{n \in \Natural}$ be a $\C$-valued sequence with $f^n \leq h^n$ for all $n \in \Natural$. By Lemma \ref{lem: A1.1}, we can extract a sequence $(\hti^n)_{n \in \Natural}$ such that, for each $n \in \Natural$, $\hti^n$ is a convex combination of $h^n, h^{n+1}, \ldots$, and such that $h \dfn \plimn \hti^n$ exists. Of course, $h \in \C$ and it is easy to see that $f \leq h$. We then conclude that $f \in \C'$.) Suppose that there exists $\fhat \in \C'$ such that $\expec[f / \fhat] \leq 1$ holds for all $f \in \C'$. Then, $\fhat \in \C$ (since $\fhat$ has to be a \emph{maximal} element of $\C'$ with respect to the order structure of $\lz$), and $\expec[f / \fhat] \leq 1$ holds for all $f \in \C$ (simply because $\C \subseteq \C'$). To recapitulate, in the course of the proof of implication $(1) \Rightarrow (3)$, we shall be assuming without loss of generality that $\C \subseteq \lz_{+}$ is solid, convex, closed and bounded in probability, as well as that $1 \in \C$.

For all $n \in \Natural$, let $\C^n \dfn \set{f \in \C \such f \leq n}$, which is convex, closed and bounded in probability and satisfies $\C^n \subseteq \C$. Consider the following optimization problem:
\begin{equation} \label{eq: log-optimal prob}
\text{find } f_*^n \in \C^n \text{ such that } \expec[\log(f_*^n)] = \sup_{f \in \C^n} \expec[\log(f)].
\end{equation}
The fact that $1 \in \C^n$ implies that the value of the above problem is not $- \infty$. Further, since $f \leq n$ for all $f \in \C^n$, one can use of Lemma \ref{lem: A1.1} in conjunction with the inverse Fatou's lemma and obtain the existence of the optimizer $f_*^n$ of \eqref{eq: log-optimal prob}. For all $f \in \C^n$ and $\epsilon \in \, ]0, 1/2]$, one has
\begin{equation} \label{eq: finite diff}
\expec \bra{\Delta_\epsilon (f \such f_*^n ) } \leq 0, \text{ where } \Delta_\epsilon (f \such f_*^n ) \dfn \frac{\log \pare{(1 - \epsilon) f_*^n + \epsilon f} - \log \pare{f_*^n}}{\epsilon}.
\end{equation}
Fatou's lemma will be used on \eqref{eq: finite diff} as $\epsilon \downarrow 0$. For this, observe that $\Delta_\epsilon (f \such f_*^n) \geq 0$ on the event $\set{f > f_*^n}$. Also, the inequality $\log(y) - \log(x) \leq (y - x) / x$, valid for $0 < x < y$, gives that, on $\set{f \leq f_*^n}$, the following lower bound holds (remember that $\epsilon \leq 1 /2$):
\[
\Delta_\epsilon (f \such f_*^n) \geq - \frac{f^n_* - f}{f^n_* - \epsilon (f^n_* - f)} \geq - \frac{f^n_* - f}{f^n_* - (f^n_* - f) / 2} = - 2 \frac{f^n_* - f}{f^n_* + f} \geq - 2.
\]
Using Fatou's Lemma on \eqref{eq: finite diff} gives $\expec \bra{(f - f^n_*) / f^n_*} \leq 0$ for all $f \in \C^n$.

Lemma \ref{lem: A1.1} again gives the existence of a sequence $(\fhat^n)_{n \in \Natural}$ such that each $\fhat^n$ is a finite convex combination of $f_*^n, f_*^{n+1},\ldots$, and $\fhat \dfn \limn \fhat^n$ exists. Since $\C$ is convex, $\fhat^n \in \C$ for all $n \in \Natural$; therefore, since $\C$ is closed, $\fhat \in \C$ as well. Fix $n \in \Natural$ and some $f \in \C^n$. For all $k \in \Natural$ with $k \geq n$, we have $f \in \C^k$. Therefore, $\expec[f / f^k_*] \leq 1$, for all $k \geq n$. Since $\fhat^n$ is a finite convex combination of $f_*^n, f_*^{n+1},\ldots$, an easy application of Jensen's inequality for the convex function $]0, \infty[ \, \ni x \mapsto 1/x \in \, ]0, \infty[$ gives that $\expec [f / \fhat^n] \leq 1$. Then, Fatou's lemma implies that for all $f \in \bigcup_{k \in \Natural} \C^k$ one has $\expec[f / \fhat] \leq 1$. The extension of the last inequality to all $f \in \C$ follows from the solidity of $\C$ by an application of the monotone convergence theorem. 
\end{proof}

By Jensen's inequality, an element $\fhat \in \C$ satisfying condition (3) of Theorem \ref{thm: main static} above is necessarily unique. (In this respect, see also \S \ref{subsubsec: numeraire}.) Therefore, the next definition makes sense.

\begin{defn}
Let $\C \subseteq \lz_+$ with $\C \cap \lz_{++} \neq \emptyset$ be convex, closed and bounded in probability. The (unique) $\fhat \in \C$ satisfying condition (3) of Theorem \ref{thm: main static} will be called the \textsl{\num \ in $\C$}.
\end{defn}

We proceed with stating and proving two results of independent interest that will help establish Proposition \ref{prop: cadlag modif}, a result concerning regularization of generalized supermartingales.

\begin{lem} \label{lem: sandwich conv}
Consider two $\lz_{+}$-valued sequences $(g^n)_{n \in \Natural}$, $(h^n)_{n \in \Natural}$ with $\expec[g^n ] \leq 1$ and $\expec[h^n] \leq 1$ for all $n \in \Natural$, as well as $\plimn (g^n  h^n) =1$. Then, $\plimn g^n = 1 = \plimn h^n$.
\end{lem}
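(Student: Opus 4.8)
The plan is to pass to square roots, where the geometric-mean structure of the hypothesis turns into an inner-product computation. Writing $u^n \dfn \sqrt{g^n}$ and $v^n \dfn \sqrt{h^n}$, the assumption $\plimn (g^n h^n) = 1$ becomes $\plimn (u^n v^n) = 1$, and both sequences lie in $L^2$ with $\expec[(u^n)^2] \leq 1$ and $\expec[(v^n)^2] \leq 1$. It suffices to prove $\plimn u^n = 1 = \plimn v^n$, since squaring (continuous mapping) then delivers $\plimn g^n = 1 = \plimn h^n$.

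The first and decisive step is to pin down the cross expectation $\expec[u^n v^n] = \expec[\sqrt{g^n h^n}]$. The Cauchy--Schwarz inequality gives the upper bound $\expec[u^n v^n] \leq \sqrt{\expec[(u^n)^2]\,\expec[(v^n)^2]} \leq 1$ for every $n$. For the matching lower bound I would argue along subsequences: every subsequence of $(u^n v^n)_{\nin}$ has a further subsequence converging $\prob$-a.s. to $1$, so Fatou's lemma forces $\liminf_n \expec[u^n v^n] \geq 1$; combined with the upper bound this yields $\limn \expec[u^n v^n] = 1$. This is the crux of the argument and the only place where the hypotheses are genuinely used --- everything afterward is algebra; I expect it to be the main (minor) obstacle, the subtlety being simply to convert convergence in probability of the product plus the expectation constraints into convergence of the cross expectation.

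With $\limn \expec[u^n v^n] = 1$ secured, I would expand $\expec[(u^n - v^n)^2] = \expec[(u^n)^2] + \expec[(v^n)^2] - 2\expec[u^n v^n] \leq 2 - 2\expec[u^n v^n]$, whose right-hand side tends to $0$. Hence $u^n - v^n \to 0$ in $L^2$, and in particular $\plimn (u^n - v^n) = 0$.

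Finally I would recover each sequence from the two controlled quantities $u^n - v^n \to 0$ and $u^n v^n \to 1$ in probability. The identity $(u^n + v^n)^2 = (u^n - v^n)^2 + 4\, u^n v^n$ together with the nonnegativity of $u^n + v^n$ gives $\plimn (u^n + v^n) = 2$. Adding and subtracting, $u^n = \frac{1}{2}\big[(u^n + v^n) + (u^n - v^n)\big] \to 1$ and $v^n = \frac{1}{2}\big[(u^n + v^n) - (u^n - v^n)\big] \to 1$ in probability, and squaring completes the proof.
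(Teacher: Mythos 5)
Your proof is correct and follows essentially the same route as the paper's: pass to square roots, use Fatou's lemma to control $\expec[\sqrt{g^n h^n}]$, expand $\expec[(\sqrt{g^n}-\sqrt{h^n})^2]\leq 2-2\expec[\sqrt{g^n h^n}]$ to get $\plimn(\sqrt{g^n}-\sqrt{h^n})=0$, and then reassemble. The only (cosmetic) differences are that you also invoke Cauchy--Schwarz to get the full limit of the cross expectation where the paper only needs the liminf bound, and you recover the limits at the level of the square roots rather than of $g^n\pm h^n$.
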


\begin{proof}
The fact that $\plimn (g^n  h^n) =1$ implies that $\plimn \sqrt{g^n  h^n} = 1$; then
\[
\limsup_{n \to \infty}  \pare{1 - \expec \bra{\sqrt{g^n  h^n}}} = 1 - \liminf_{n \to \infty}  \expec \bra{\sqrt{g^n  h^n}} \leq 0,
\]
as follows from Fatou's Lemma. Now, since
\[
\expec \bra{ \pare{ \sqrt{g^n} - \sqrt{h^n} }^2} = \expec \bra{g^n} + \expec \bra{h^n} - 2 \expec \bra{ \sqrt{g^n h^n}} \leq 2 \pare{1 - \expec \bra{ \sqrt{g^n h^n}} },
\]
we obtain that $\plimn \pare{\sqrt{g^n} - \sqrt{h^n}} = 0$. In view of $g^n - h^n = \big(\sqrt{g^n} - \sqrt{h^n}\big)\big(\sqrt{g^n} + \sqrt{h^n}\big)$ and the fact that both sequences $(g^n)_{n \in \Natural}$, $(h^n)_{n \in \Natural}$ are bounded in probability (because $\expec[g^n] \leq 1$ and $\expec[h^n] \leq 1$ for all $n \in \Natural$), we also have $\plimn \pare{g^n - h^n} = 0$. Furthermore, the equality $g^n + h^n = \big( \sqrt{g^n} - \sqrt{h^n} \big)^2 + 2 \sqrt{g^n h^n}$ gives $\plimn \pare{g^n + h^n} = 2$. Finally, combining $\plimn \pare{g^n - h^n} = 0$ and $\plimn \pare{g^n + h^n} = 2$ gives $\plimn g^n = 1 = \plimn h^n$.
\end{proof}

\begin{prop} \label{prop: decr and incr conv of num}
For each $n \in \Natural \cup \set{\infty}$, let $\C^n$ be a convex, closed and bounded subset of $\lz_+$ with $\C^n \cap \lzpp \neq \emptyset$, and let $\fhat^n$ be the \num \ in $\C^n$. (These \num s exist in view of Theorem \ref{thm: main static}.) Then, $\plimn \fhat^n = \fhat^\infty$ holds in either of the following cases:
\begin{enumerate}
  \item $(\C^n)_{n \in \Natural}$ is nondecreasing and $\C^\infty$ is the closure in probability of $\bigcup_{n \in \Natural} \C^n$.
  \item $(\C^n)_{n \in \Natural}$ is nonincreasing and $\C^\infty = \bigcap_{n \in \Natural} \C^n$.
\end{enumerate}
\end{prop}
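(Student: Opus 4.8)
The guiding idea is to apply the ``sandwich'' Lemma~\ref{lem: sandwich conv}: in each case I will exhibit two sequences $g^n, h^n \in \lzp$ with $\expec[g^n] \le 1$, $\expec[h^n] \le 1$ and $\plimn (g^n h^n) = 1$, arranged so that $g^n = \fhat^n / \fhat^\infty$ (or its reciprocal); Lemma~\ref{lem: sandwich conv} then forces $\plimn (\fhat^n/\fhat^\infty) = 1$, which is the claim since $\fhat^\infty \in \lzpp$. The only ratios whose mean is controlled by $1$ are those of the form $f/\fhat$ with $f$ in a set whose num\'eraire is $\fhat$, so the whole problem is to manufacture such ratios whose product collapses to $1$. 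After dividing every $\C^n$ by a common element of $\bigcap_n \C^n \cap \lzpp$ (which exists: take it in $\C^1$ in case~(1), in $\C^\infty$ in case~(2)), I may and will assume $1 \in \C^n$ for all $n \in \Natural \cup \set{\infty}$; this rescales every $\fhat^n$ by the same factor and leaves the assertion unchanged, while making each polar set bounded in probability.

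Case~(1) (increasing) admits a direct, self-contained argument. Since $\fhat^n \in \C^n \subseteq \C^\infty$, the defining property of $\fhat^\infty$ gives $\expec[\fhat^n/\fhat^\infty] \le 1$, so $g^n := \fhat^n/\fhat^\infty$ is admissible. To build the matching $h^n$ I will choose $\phi^n \in \C^n$ with $\plimn \phi^n = \fhat^\infty$: this is possible because $\fhat^\infty \in \C^\infty = \overline{\bigcup_n \C^n}$ and the $\C^n$ increase, so $\dist(\fhat^\infty, \C^n) \downarrow 0$ and a near-minimiser $\phi^n \in \C^n$ works. Setting $h^n := \phi^n/\fhat^n$, the num\'eraire property of $\fhat^n$ gives $\expec[h^n] \le 1$, while $g^n h^n = \phi^n/\fhat^\infty \To 1$ in probability. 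Lemma~\ref{lem: sandwich conv} then yields $\plimn (\fhat^n/\fhat^\infty) = 1$, i.e. $\plimn \fhat^n = \fhat^\infty$.

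For Case~(2) (decreasing) the same recipe fails at the one point that matters, and the plan is to reduce it to Case~(1) by polarity. Here the free inequality points the other way: $\fhat^\infty \in \C^\infty \subseteq \C^n$ gives $\expec[\fhat^\infty/\fhat^n] \le 1$. Associate to each convex, closed, solid, bounded $\C \ni 1$ its polar $\C^\circ := \set{g \in \lzp \such \expec[f g] \le 1 \text{ for all } f \in \C}$, which is again of the same type, and observe that its num\'eraire is exactly $1/\fhat$: indeed $1/\fhat \in \C^\circ$ because $\expec[f/\fhat] \le 1$, and for every $g \in \C^\circ$ one has $\expec[g/(1/\fhat)] = \expec[g \fhat] \le 1$ by taking $f = \fhat$ in the definition of $\C^\circ$. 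Because polarity reverses inclusions, the sets $(\C^n)^{\circ}$ increase, and the bipolar relation $\big(\bigcap_n \C^n\big)^{\circ} = \overline{\bigcup_n (\C^n)^{\circ}}$ identifies $(\C^\infty)^{\circ}$ with the closure of their union. Applying Case~(1) to the increasing family $(\C^n)^{\circ}$ gives $\plimn (1/\fhat^n) = 1/\fhat^\infty$, and hence $\plimn \fhat^n = \fhat^\infty$.

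The main obstacle is precisely the decreasing Case~(2), and within it the bipolar identity $\big(\bigcap_n \C^n\big)^{\circ} = \overline{\bigcup_n (\C^n)^{\circ}}$ — equivalently, the production of elements of $(\C^n)^{\circ}$ converging to $1/\fhat^\infty$. Its easy inclusion $\supseteq$ is immediate, but the reverse inclusion is a Hahn--Banach/bipolar fact for subsets of $\lzp$ and is where the real work sits. A purely primal imitation of Case~(1) is hopeless: it would require approximating the moving num\'eraires $\fhat^n$, which live in the large sets $\C^n$, by elements of the fixed small set $\C^\infty$, and such approximants need not exist. The alternative, value-theoretic route — showing $\expec[\log(\fhat^n/\fhat^\infty)] \To 0$ and invoking strict convexity of $x \mapsto x - 1 - \log x$ — founders on the same asymmetry, since $\log^+ \fhat^n$ need not be uniformly integrable and a Lemma~\ref{lem: A1.1} extraction only controls convex combinations, not the sequence itself. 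I would therefore invest the effort in the polarity step; everything else is routine once it is in place.
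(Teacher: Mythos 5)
Your Case (1) is correct and is in fact a cleaner route than the paper's: instead of extracting convex combinations of the $\fhat^n$ via Lemma \ref{lem: A1.1} and running a subsequence-of-a-subsequence argument, you use the density of $\bigcup_{\nin} \C^n$ in $\C^\infty$ to pick $\phi^n \in \C^n$ with $\plimn \phi^n = \fhat^\infty$, and then Lemma \ref{lem: sandwich conv} applied to $g^n \dfn \fhat^n/\fhat^\infty$ and $h^n \dfn \phi^n/\fhat^n$ settles the full sequence at once. Your diagnosis that the decreasing case is where the difficulty sits is also accurate.

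But your Case (2) is not a proof: it reduces the statement to the identity $\pare{\bigcap_{\nin} \C^n}^\circ = \overline{\bigcup_{\nin} (\C^n)^\circ}$ and you explicitly leave its hard inclusion open. That inclusion is not a routine Hahn--Banach fact, since $\lz$ is not locally convex; the relevant tool is the bipolar theorem for subsets of $\lzp$ (Brannath--Schachermayer), which is not available in the paper and whose proof rests on exactly the $\lzp$-machinery (Lemma \ref{lem: A1.1}, \num-type optimizers) underlying the present proposition --- so invoking it is close to circular, and proving it is at least as hard as the claim at hand. Indeed, the only piece of the identity you actually need is $1/\fhat^\infty \in \overline{\bigcup_{\nin}(\C^n)^\circ}$, i.e.\ that $1/\fhat^\infty$ can be approximated in probability by elements $g$ satisfying $\expec[f g]\le 1$ for all $f \in \C^n$; together with the sandwich lemma this is essentially a restatement of $\plimn \fhat^n = \fhat^\infty$ itself. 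There is also a hypothesis mismatch: the bipolar theorem yields $(\C^n)^{\circ\circ} = \C^n$ only for closed, convex, \emph{solid} sets, the proposition does not assume solidity, and solidification does not commute with countable intersections, so even the reduction to solid $\C^n$ is nontrivial in the decreasing case. The paper avoids all of this by staying primal: Lemma \ref{lem: A1.1} applied to $(\fhat^n)_{\nin}$ produces convex combinations $\tf^n$ of $\fhat^n, \ldots, \fhat^{\ell_n}$ with $\plimn \tf^n = \fhat^\infty$, and the sandwich lemma is then applied along the recursively defined subsequence $m_n = \ell_{m_{n-1}}$ to the telescoping pair $g^n \dfn \fhat^{m_n}/\tf^{m_{n-1}}$ and $h^n \dfn \tf^{m_n}/\fhat^{m_n}$, whose expectations are both at most $1$ by the \num\ property of $\fhat^{m_n}$ in $\C^{m_n}$ and Jensen's inequality for $x \mapsto 1/x$. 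You should replace your polarity step by an argument of this kind.
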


\begin{proof}
In the course of the proof below we drop all superscripts ``$\infty$'' to ease the readability. To establish both statements (1) and (2) below, we shall just show the existence of a subsequence $(\fhat^{m_n})_{n \in \Natural}$ of $(\fhat^{n})_{n \in \Natural}$ such that $\plimn \fhat^{m_n} = \fhat$. By the same argument, it will follow that \emph{any} subsequence of $(\fhat^{n})_{n \in \Natural}$ has a \emph{further} subsequence that converges to $\fhat$. Since $\lzp$ is equipped with a  metric topology, this will imply that the whole sequence $(\fhat^{n})_{n \in \Natural}$ converges to $\fhat$.

\smallskip
\noindent \emph{Proof of (1)}. Lemma \ref{lem: A1.1} gives the existence of a sequence $(\tf^n)_{n \in \Natural}$ such that each $\tf^n$ is a convex combination of $(\fhat^k)_{k = n, \ldots, m_n}$ for some $n \leq m_n \in \Natural$, and such that $\tf \dfn \plimn \tf^n$ exists. Of course, $\tf \in \C$. Obviously, $\limn m_n = \infty$; we can also also assume that $(m_n)_{n \in \Natural}$ is an increasing sequence, forcing it to be if necessary.

Since $\expec[f / \fhat^k] \leq 1$ holds for all $f \in \C^n$ and $n \leq k$, Jensen's inequality applied by using the convex function $]0, \infty[ \ni x \mapsto 1/x \in ]0, \infty[$ implies that $\expec[f / \tf^k] \leq 1$ holds for all $f \in \C^n$ and $n \leq k$. By Fatou's lemma, $\expec[f / \tf] \leq 1$ holds for all $n \in \Natural$ and $f \in \C^n$. In particular, $\tf \in \C \cap \lzpp$. As $(\C^n)_{n \in \Natural}$ is nondecreasing and $\C$ is the $\lz$-closure of $\bigcup_{n \in \Natural} \C^n$, Fatou's lemma applied once again will give $\expec[f / \tf] \leq 1$ for all $f \in \C$. By uniqueness of the \num, we get $\tf = \fhat$. Since $\fhat \in \lzpp$, it follows that $\plimn (\tf^n / \fhat) = 1$.

Since $\fhat^{m_n}$ is the \num  \ in $\C^{m_n}$ and $\tf^n \in \C^{m_n}$ for all $n \in \Natural$, $\expec[\tf^n / \fhat^{m_n}] \leq 1$ holds for all $n \in \Natural$. Also, $\expec[\fhat^{m_n} / \fhat] \leq 1$ is obvious because $\fhat$ is the \num \ in $\C$. Letting $g^n \dfn \tf^n / \fhat^{m_n}$ and $h^n \dfn \fhat^{m_n} / \fhat$ for all $n \in \Natural$, the conditions of the statement of Lemma \ref{lem: sandwich conv} are satisfied. Therefore, $\plimn h^n = 1$, which exactly translates to $\plimn \fhat^{m_n} = \fhat$.

\smallskip
\noindent \emph{Proof of (2)}. One applies again Lemma \ref{lem: A1.1} to get the existence of a sequence $(\tf^n)_{n \in \Natural}$ such that each $\tf^n$ is a convex combination of $(\fhat^k)_{k = n, \ldots, \ell_n}$ for some $n \leq \ell_n \in \Natural$, and such that $\tf \dfn \plimn \tf^n$ exists. We can assume that $(\ell_n)_{n \in \Natural}$ is an increasing sequence, forcing it to be if necessary. Following the same reasoning as in the proof of case (1) one can show that $\tf = \fhat$.

Define $m_0 = 1$ and a $\Natural$-valued increasing sequence $(m_n)_{n \in \Natural}$ inductively via $m_n = \ell_{m_{n-1}}$ for all $n \in \Natural$. Then, it is straightforward to check that $\expec[\fhat^{m_n} / \tf^{m_{n-1}}] \leq 1$ and $\expec[\tf^{m_{n}} / \fhat^{m_n}] \leq 1$ hold for all $n \in \Natural$. Letting $g^n \dfn \fhat^{m_n} / \tf^{m_{n-1}}$ and $h^n \dfn \tf^{m_{n}} / \fhat^{m_n}$ for all $n \in \Natural$, the conditions of the statement of Lemma \ref{lem: sandwich conv} are satisfied. Therefore, $\plimn h^n = 1$, which, in view of $\plimn \tf^{m_{n}} = \fhat$ gives $\plimn \fhat^{m_n} = \fhat$.
\end{proof}

\begin{rem}
The result of Proposition \ref{prop: decr and incr conv of num}(2) does not necessarily hold if $\C^\infty \cap \lzpp = \emptyset$. Indeed, let
  $\Omega = (0, 1]$, $\F$ be the Borel $\sigma$-field on $\Omega$ and
  $\prob$ be Lebesgue measure on $(\Omega, \F)$. Consider two
  \emph{nonincreasing} sequences $(\fhat^n)_{\nin}$ and $(g^n)_{\nin}$
  via $\fhat^n \dfn (1/2) \indic_{(0, \, 1/3]} + (1/n) \indic_{(1/3,
    \, n / (n+1)]} + \indic_{(n / (n+1), \, 1]}$ and $g^n \dfn
  \indic_{(0, \, 1/3]} + (1/ (5 n)) \indic_{(1/3, \, 1]}$. For each $\nin$, define
\[
\C^n \dfn \set{h \in \lzp \such h \leq (1 - \alpha) \fhat^n + \alpha g^n \text{ for some } \alpha \in [0,1]}.
\]
Of course, $(\C^n)_{\nin}$ is a nonincreasing sequence of sets that are convex, closed and \pbdd. In fact, $\fhat^n$ is the \num \ in $\C^n$ for all $\nin$ as it easily follows from the inequality
  \[
  \expec \bra{\frac{g^n}{\fhat^n}} = 2 \pare{\frac{1}{3}} +
  \frac{1}{5} \pare{\frac{n}{n + 1} - \frac{1}{3}} +
  \frac{1}{5n} \pare{\frac{1}{n + 1}} \leq \frac{2}{3} + \frac{2}{15}
  + \frac{1}{5} = 1.
  \]
Now, $\C^\infty =
  \bigcap_{\nin} \C^n = \{ h \indic_{(0, \, 1/3]} \such h \in \lzp \text{ with } h \leq
  1 \}$, from which it follows that $\fhat = \indic_{(0, \,
    1/3]}$ is the \num \ in $\C^\infty$. However, the sequence $(\fhat^n)_{\nin}$ converges in probability to $(1/2) \indic_{(0, 1/3]}$, which is distinct from $\fhat$.
\end{rem}

The next result concerns the ``regularization in probability'' of processes and is the analogue of path regularization of nonnegative supermartingales (see, for example, Proposition 1.3.14 of \cite{MR1121940}). Before the statement of Proposition \ref{prop: cadlag modif}, we introduce some notation. Fix a nonnegative process $X \in \X$. For $s \in [0, T[$, if $\plimn X_{t^n}$ exists and is the \emph{same} for any \emph{strictly} decreasing $[0, T]$-valued sequence $(t^n)_{n \in \Natural}$ such that $\limn t^n = s$, we shall be denoting this common limit by $\plim_{t \dda s} X_t$. By definition, we set $\plim_{t \dda T} X_t = X_T$.
Similarly, if $t \in ]0, T]$ and $\plimn X_{s^n}$ exists and is the same for any \emph{strictly} increasing $[0, T]$-valued sequence $(s^n)_{n \in \Natural}$ such that $\limn s^n = t$, we shall be denoting this latter limit by $\plim_{s \dua t} X_s$.

\begin{prop} \label{prop: cadlag modif}
Let $Z$ be a strictly positive generalized supermartingale with respect to $\bF$. Then, for all $t \in [0, T]$, $Z_{t+} \dfn \plim_{\tau \dda t} Z_\tau$ exists. If $\tau \in \, ]0, T]$, $Z_{\tau-} \dfn \plim_{t \dua \tau} Z_t$ exists as well. Furthermore, $(Z_{t+})_{t \in [0, T]}$ is a strictly positive generalized supermartingale with respect to $\bF$, and $\plim_{t \dua \tau} Z_{t+}$ exists and is equal to $Z_{\tau -}$ for all $\tau \in \, ]0, T]$.
\end{prop}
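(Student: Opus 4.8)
The plan is to reduce the existence of both one-sided limits to a single convergence statement tailor-made for Proposition \ref{prop: decr and incr conv of num}(1). Fix $t \in [0, T[$ and a strictly decreasing sequence $(t^n)_{\nin}$ with $\limn t^n = t$, and set $F^n \dfn Z_{t^n} / Z_t \in \lzpp$ (here strict positivity of $Z$ is used). Applying the generalized supermartingale property to the pair $t \leq t^n$ gives $\expec[F^n] \leq 1$, while applying it to $t^m \leq t^n$ for $n < m$ and using $F^n / F^m = Z_{t^n} / Z_{t^m}$ gives $\expec[F^n / F^m] \leq 1$ whenever $n < m$. The left-hand case is entirely analogous: for $\tau \in \, ]0, T]$ and strictly increasing $(s^n)_{\nin}$ with $\limn s^n = \tau$, one sets $G^n \dfn Z_\tau / Z_{s^n}$, records $Z_{s^n} = Z_\tau / G^n$, and obtains the same two families of inequalities. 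Thus everything rests on the claim that any $\lzpp$-valued sequence $(F^n)_{\nin}$ with $\expec[F^n] \leq 1$ for all $n$ and $\expec[F^n / F^m] \leq 1$ for all $n < m$ converges in probability.

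The key observation, which I expect to be the crux of the whole argument, is that each $F^m$ is \emph{exactly} the \num \ in the set $\C^{(m)} \dfn \overline{\mathrm{conv}} \set{F^1, \ldots, F^m}$, the closure taken in probability. Indeed, $\C^{(m)}$ is convex, closed, and bounded in probability---the last because every convex combination of the $F^k$ has expectation at most $1$, a property preserved under probability-limits by Fatou's lemma---and it meets $\lzpp$ since $F^m \in \C^{(m)}$; moreover $\expec[f / F^m] \leq 1$ for every $f \in \C^{(m)}$ follows from $\expec[F^k / F^m] \leq 1$ ($k \leq m$) by linearity and Fatou, so the characterization of the \num \ in Theorem \ref{thm: main static}(3) identifies $F^m$ as claimed. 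Since $(\C^{(m)})_{\nin}$ is nondecreasing, Proposition \ref{prop: decr and incr conv of num}(1), applied with $\C^\infty$ the probability-closure of $\bigcup_m \C^{(m)}$, yields that the \num s $F^m$ converge in probability to the \num \ in $\C^\infty$. This proves the claim, hence that $Z_{t^n} = Z_t F^n$ converges. That the limit is independent of the chosen sequence follows by merging any two admissible strictly decreasing (resp. increasing) sequences into a single monotone one and invoking the claim for the merged sequence, whose two subsequential limits must coincide; this establishes existence of $Z_{t+}$ and of $Z_{\tau-}$. Once the numéraire identification is in place, Proposition \ref{prop: decr and incr conv of num}(1) does the heavy lifting.

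Strict positivity of $Z_{t+}$ is then immediate: writing $F^\infty \dfn \plimn F^n \in \lzpp$ (the \num \ of $\C^\infty$), one has $Z_{t+} = Z_t F^\infty$, a product of two elements of $\lzpp$, and similarly $Z_{\tau-} = Z_\tau / G^\infty \in \lzpp$ with $G^\infty \dfn \plimn G^n$. For the generalized supermartingale property of $(Z_{t+})_{t \in [0,T]}$, I would fix $s \leq t$; the only nontrivial case is $s < t$. Choose strictly decreasing $s^n \dda s$ and $t^n \dda t$ with $s^n \leq t^n$ for all $n$. For any $A \in \F_s$ one has $A \in \F_{s^n}$ because $s \leq s^n$, so $\expec[(Z_{t^n} / Z_{s^n}) \indic_A] \leq \prob[A]$ by the property of $Z$. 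Since $Z_{s^n} \to Z_{s+} \in \lzpp$ and $Z_{t^n} \to Z_{t+}$ in probability, the ratios converge in probability to $Z_{t+} / Z_{s+}$, and Fatou's lemma passes to the limit to give $\expec[(Z_{t+} / Z_{s+}) \indic_A] \leq \prob[A]$ for every $A \in \F_s$; right-continuity of $\bF$ is what makes $\F_s$ the correct class to test against. Hence $\expec[Z_{t+} / Z_{s+} \such \F_s] \leq 1$. The boundary case $t = T$ is handled by keeping $t^n = T$ fixed and using $Z_{T+} = Z_T$.

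Finally, for $\plim_{t \dua \tau} Z_{t+} = Z_{\tau-}$, I would fix strictly increasing $t^n \uparrow \tau$. By definition of $Z_{t^n+}$ as a probability-limit from the right, for each $n$ one may pick $r^n \in \, ]t^n, t^{n+1}[$ with $\dist(Z_{r^n}, Z_{t^n+}) \leq 2^{-n}$; then $(r^n)_{\nin}$ is strictly increasing to $\tau$, so $\plimn Z_{r^n} = Z_{\tau-}$, while $\dist(Z_{r^n}, Z_{t^n+}) \to 0$ forces $\plimn Z_{t^n+} = Z_{\tau-}$ as well. As this holds for every strictly increasing sequence, $\plim_{t \dua \tau} Z_{t+}$ exists and equals $Z_{\tau-}$. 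The routine points to verify carefully are the $L^1$-boundedness (hence boundedness in probability) of the sets $\C^{(m)}$ and $\C^\infty$ and the Fatou limit-passages; the genuinely load-bearing step is recognizing $F^m$ (resp.\ $G^m$) as the \num \ of an increasing family of sets.
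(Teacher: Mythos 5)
Your proof is correct, and its backbone is the same as the paper's: identify the candidate one-sided limits as limits of num\'eraires of a monotone family of convex, closed, bounded-in-probability sets, invoke Proposition \ref{prop: decr and incr conv of num}, and then get the generalized supermartingale property of $Z_{\cdot +}$ by testing against $A \in \F_s \subseteq \F_{s^n}$ and passing to the limit with Fatou. The execution differs in one genuinely interesting way. The paper works with the unnormalized sets $\C_t \dfn \overline{\mathrm{conv}} \set{Z_\tau \such \tau \in [t, T]}$, of which $Z_t$ is the num\'eraire; right limits then come from the \emph{nondecreasing} case (1) of Proposition \ref{prop: decr and incr conv of num} applied to $\C_{t+} = \bigcup_{\tau \in \otcT} \C_\tau$, while left limits use the more delicate \emph{nonincreasing} case (2) applied to $\bigcap_{t \in [0, \tau[} \C_t$, whose hypothesis $\C^\infty \cap \lzpp \neq \emptyset$ is guaranteed by $Z_\tau$ itself lying in the intersection. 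By normalizing --- $F^n = Z_{t^n}/Z_t$ on the right and $G^n = Z_\tau / Z_{s^n}$ on the left --- you arrange in both cases that $\expec[F^k / F^m] \leq 1$ for $k \leq m$, so that $F^m$ is the num\'eraire of the \emph{nondecreasing} sets $\overline{\mathrm{conv}}\set{F^1, \ldots, F^m}$ and only case (1) is ever needed; the price is that your sets depend on the chosen sequence, so you must add the (correct) merging argument for sequence-independence and the diagonal choice of $r^n \in \, ]t^n, t^{n+1}[$ to recover $\plim_{t \dua \tau} Z_{t+} = Z_{\tau-}$, which the paper gets for free from the identity $\bigcap_{t \in [0,\tau[} \oC_{t+} = \bigcap_{t \in [0,\tau[} \C_t$. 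One step you assert but should spell out, since it is where strict positivity of the limit is genuinely used: to pass from $\plimn G^n = G^\infty$ to convergence of $Z_{s^n} = Z_\tau / G^n$ you need $G^\infty \in \lzpp$, which holds because the num\'eraire of $\C^\infty$ belongs to $\C^\infty \cap \lzpp$ by Theorem \ref{thm: main static}(3). With that noted, the argument is complete.
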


\begin{proof}
For $t \in [0, T]$, let $\C_t$ be the closed (in probability) convex hull of $\{ Z_\tau \such \tau \in [t, T] \}$. It follows that $\C_t \subseteq \C_s$ whenever $s \in [0, T]$ and $t \in [s, T]$. Also, $Z_t$ is the \num \ in $\C_t$, since $\expec[Z_\tau / Z_t ] \leq 1$ whenever $t \in [0, T]$ and $\tau \in [t, T]$. In particular, in view of Theorem \ref{thm: main static}, $\C_t$ is \pbdd \ for all $t \in [0, T]$.

For all $t \in [0, T[$, let $\C_{t +} \dfn \bigcup_{\tau \in \, \otcT } \C_\tau$, as well as $\C_{T+} \dfn \C_T$. For all $t \in [0, T]$, $\C_{t+} \subseteq \C_{t}$, and $\C_{t+} =  \bigcup_{n \in \Natural} \C_{\tau^n}$ holds for any \emph{strictly} decreasing $[0, T]$-valued sequence $(\tau^n)_{n \in \Natural}$ with $\limn \tau^n = t$ whenever $t \in [0, T[$. An application of Proposition \ref{prop: decr and incr conv of num} gives that $Z_{t+} \dfn \plim_{\tau \dda t} Z_\tau$ exists for all $t \in [0, T]$ and it is actually equal to the \num \ in $\oC_{t+}$, where $\oC_{t+}$ will denote the closure in probability of $\C_{t+}$. (Observe that the \num \ in $\oC_{t+}$ exists by Theorem \ref{thm: main static}, as $\C_{t+} \cap \lz_{++} \neq \emptyset$ and $\oC_{t+}$ is convex and \pbdd.)

Consider now the process $Z_{\cdot +} \dfn (Z_{t+})_{t \in [0, T]}$. Since $\C_{t+} \cap \lz_{++} \neq \emptyset$ for all $t \in [0, T]$ and $Z_{t+}$ is the \num \ in $\oC_{t+}$, it follows that $Z_{t+} \in \lz_{++}$, i.e., $Z_{\cdot +}$ is strictly positive. We claim that $Z_{\cdot +}$ is \lzcadlag; indeed, for $t \in [0, T[$, and as $\oC_{t +}$ coincides with the closure in probability of $\bigcup_{\tau \in \, \otcT} \oC_{\tau +}$, an application of Proposition \ref{prop: decr and incr conv of num}(1) gives that $Z_{t +} = \plim_{\tau \dda t} Z_{\tau+}$. Now, for all $\tau \in \, ]0, T]$ we have $\bigcap_{t \in [0, \tau[ } \oC_{t+} = \bigcap_{t \in [0, \tau[} \C_{t}$. An application of Proposition \ref{prop: decr and incr conv of num}(2) gives that $\plim_{t \dua \tau} Z_{t+}$ and $\plim_{t \dua \tau} Z_{t}$ exist, and they are actually equal.

It only remains to show that $\expec[Z_{t+} / Z_{s+} \such \F_s] \leq 1$ holds whenever $s \in [0, T]$ and $t \in [s, T]$. Fix $s \in [0, T]$ and $t \in [s, T]$, as well as $A \in \F_s$. For all $n \in \Natural$, with $s^n \dfn (1 - 1/n)s + T/n$ and $t^n \dfn (1 - 1/n)t + T/n$, the generalized supermartingale property of $Z$ with respect to $\bF$ and the fact that $A \in \F_s \subseteq \F_{s^n}$ give $\expec[(Z_{t^n} / Z_{s^n}) \indic_A] \leq \prob[A]$. Then, Fatou's lemma gives $\expec[(Z_{t+} / Z_{s+} ) \indic_A] \leq \prob[A]$. Since $A \in \F_s$ was arbitrary we get $\expec[Z_{t+} / Z_{s+} \such \F_s ] \leq 1$.
\end{proof}

\begin{rem}

The proof of Proposition \ref{prop: cadlag modif} is based on a generalization of Doob's celebrated result regarding the convergence of nonnegative supermartingales. For simplicity, we discuss the case where the time-set is discrete, i.e., the process is indexed by $\Natural$ --- the extension to $\Real_+$-indexed processes is straightforward. Let $(g_n)_{\nin}$ be an $\lzp$-valued sequence of random variables such that $\expec[g_n / g_m] \leq 1$ holds whenever $\Natural \ni m \leq n \in \Natural$, and such that the convex hull of $\set{g_n \such \nin}$ is bounded away from zero in probability. Following the ideas in the proof of Proposition \ref{prop: cadlag modif} --- more precisely, using statement (2) of Proposition \ref{prop: decr and incr conv of num} --- we can obtain that $\plimn g_n$ exists. To compare this result with the nonnegative supermartingale convergence theorem, let $\Hcal_n$ denote the smallest $\sigma$-field that makes all random variables $g_1, \ldots, g_n$ measurable. Doob's well-known result states that if $\expec[g_n  \such \Hcal_m] \leq g_m$ holds whenever $\Natural \ni m \leq n \in \Natural$, then $\limn g_n$ $\prob$-a.s. exists. Rewrite the supermartingale property $\expec[g_n  \such \Hcal_m] \leq g_m$ as $\expec[g_n / g_m \such \Hcal_m] \leq 1$, and note in particular that $\expec[g_n / g_m] \leq 1$ whenever $\Natural \ni m \leq n \in \Natural$. (This is just the generalized supermartingale property of $(g_n)_{\nin}$ under the trivial filtration.) Therefore, the supermartingale convergence theorem becomes a special case of our result, since no conditioning is used in the generalized supermartingale property of $(g_n)_{\nin}$. However, one can no longer claim that $(g_n)_{\nin}$ converges $\prob$-a.s.; this is the reason why only a regularization ``in probability'' is obtained in Proposition \ref{prop: cadlag modif}.

We remark that this generalization of Doob's supermartingale convergence theorem seems new (the author was not able to spot any such occurrence in the literature) and its proof does not use ``traditional'' methods and tools of martingale theory.
\end{rem}

We are now ready to give the proof of Theorem \ref{thm: main dynamic}.

\begin{proof}[Proof of Theorem \ref{thm: main dynamic}.] We show the implications $(1) \Rightarrow (2)$, $(1) \Rightarrow (3)$, $(3) \Rightarrow (2)$ and $(2) \Rightarrow (1)$ below. The fact that all processes in $\X$ are \lzcadlag \ under any of the equivalent conditions (1) or (2) in Theorem \ref{thm: main dynamic} is discussed after the proof of implication $(2) \Rightarrow (1)$. As discussed in \S \ref{subsubsec: financial interpretation}, we can, and shall, assume that property (b) of the set $\X$ in the statement of Theorem \ref{thm: main dynamic} is strengthened into $1 \in \X$.

\smallskip

\noindent \underline{$(1) \Rightarrow (2)$}. For all $t \in [0, T]$, let $\C_t \dfn \set{X_t \such X \in \X}$. The convexity of $\X$ implies that $\C_t$ is convex for all $t \in [0, T]$. Let $X \in \X$. The switching property of $\X$, combined with $1 \in \X$ gives that $\tX \dfn X_{t \wedge \cdot}$ is also in $\X$; since $\tX_T = X_t$, we obtain that $\set{X_t \such X \in \X} \subseteq \set{X_T \such X \in \X}$. Therefore, $\C_t$ is \pbdd \ for all $t \in [0, T]$. From Theorem \ref{thm: main static} it follows that, for all $t \in [0, T]$, there exists $\fhat_t$ in the closure in probability of $\C_t$ such that $\expec[f / \fhat_t] \leq 1$ holds for all $f \in \C_t$.

Now, let $(\xi^n)_{n \in \Natural}$ be a sequence in $\X$ such that $\xi^n_T \in \lzpp$ for all $\nin$ and $\plimn \xi^n_T = \fhat_T$. We shall show that $\plimn \xi^n_t = \fhat_t$ actually holds for all $t \in [0, T]$. Fix $t \in [0, T]$ and let $(\chi^n)_{n \in \Natural}$ be a sequence in $\X$ such that $\chi^n_t \in \lzpp$ for all $\nin$ and $\plimn \chi^n_t = \fhat_t$. We can assume without loss of generality that $\expec[\xi^n_t / \chi^n_t] \leq 1$ for all $n \in \Natural$. (Indeed, if the latter fails we can replace $\chi^n$ with $\psi^n$, an appropriate convex combination of $\chi^n$ and $\xi^n$, such that $\expec[\xi^n_t / \psi^n_t] \leq 1$ and $\expec[\chi^n_t / \psi^n_t] \leq 1$ hold for all $n \in \Natural$; in effect, $\psi^n_t$ is the \num \ in $\set{(1 - \alpha) \chi^n_t + \alpha \xi^n_t \such \alpha \in [0,1]}$. Lemma \ref{lem: sandwich conv} with $g^n \dfn \chi_t^n / \psi_t^n$ and $h^n = \psi_t^n / \fhat_t$ for all $n \in \Natural$ implies that this new $\C_t$-valued sequence $(\psi^n_t)_{n \in \Natural}$ will still converge to $\fhat_t$.) Now, for each $n \in \Natural$, let $\zeta^n \dfn \chi^n_{t \wedge \cdot} (\xi^n_{t \vee \cdot} / \xi^n_t)$. We have $\zeta^n \in \X$ by the switching property, and $\zeta^n_T = (\chi^n_t / \xi^n_t) \xi^n_{T}$. Then, $\expec[\xi^n_T / \zeta^n_T ] = \expec[\xi^n_t / \chi^n_t ] \leq 1$ for all $n \in \Natural$. An application of Lemma \ref{lem: sandwich conv} with $g^n \dfn \xi^n_T / \zeta^n_T$ and $f^n \dfn \zeta^n_T / \fhat_T$ for $n \in \Natural$ gives $\plimn \zeta^n_T = \fhat_T$. Combining this with $\plimn \chi^n_t = \fhat_t$, we get $\plimn (\xi_t^n / \xi_T^n) = \fhat_t / \fhat_T$, and, therefore, $\plimn \xi_t^n = \fhat_t$, which is the claim we wished to establish.

Define $\hY_t \dfn 1 / \fhat_t$ for all $t \in [0, T]$; as $\fhat_t \in \lz_{++}$, $\hY$ is a well-defined and strictly positive process. We claim that $\limn \expec \big[ |\hY_t \xi^n_t - 1 | \big] = 0$ holds for each $t \in [0, T]$. Indeed, since $\plimn (\hY_t \xi^n_t) = 1$ and $(\hY_t \xi^n_t) \in \lz_+$ for all $n \in \Natural$,
by Theorem 16.14(ii), page 217 in \cite{MR1155402} one needs to establish that $\limn \expec[\hY_t \xi^n_t] = 1$, which follows from $1 = \expec \big[ \liminf_{n \to \infty} \hY_t \xi^n_t \big] \leq \liminf_{n \to \infty}  \expec[ \hY_t \xi^n_t] \leq \limsup_{n \to \infty}  \expec[ \hY_t \xi^n_t] \leq 1$. In particular, for all $A \in \G$ we have $\limn \expec[\hY_t \xi^n_t \indic_A] = \prob[A]$.

Fix $s \in [0, T]$, $t \in [s, T]$, $A \in \F_s$ and a strictly positive $X \in \X$. For $n \in \Natural$, let $\tX^n \dfn \indic_{\Omega \setminus A} \, \xi^n_{\cdot} + \indic_{A} (\xi^n_{s} / X_{s} ) X_{s \vee \cdot}$. The switching property of $\X$ implies that $\tX^n \in \X$. Furthermore, $\tX^n_t = \indic_{\Omega \setminus A} \, \xi^n_{t} + \indic_{A} (\xi^n_{s} / X_{s} ) X_{t}$. Then, $\expec[\tX^n_t \hY_t] \leq 1$ translates to the inequality $\expec \big[ (X_t / X_s) \hY_t \xi^n_s \indic_A \big] \leq 1 - \expec[\indic_{\Omega \setminus A} \hY_t \xi^n_t]$.
Using Fatou's lemma on the left-hand side of this inequality and the fact that $\limn \expec[\indic_{\Omega \setminus A} \hY_t \xi^n_t] = 1 - \prob[A]$ on the right-hand-side, we obtain
\begin{equation} \label{eq: pre-deflator property}
\expec \bra{\frac{X_t \hY_t}{X_s \hY_s} \indic_A} \leq \prob[A].
\end{equation}
Since $A \in \F_s$ was arbitrary, it follows that $\expec \big[ X_t \hY_t / (X_s \hY_s) \such \F_s \big] \leq 1$ for all strictly positive $X \in \X$. 

Since $1 \in \X$, using Proposition  \ref{prop: cadlag modif} with $Z \dfn \hY$ we obtain a strictly positive generalized supermartingale $Y$ with respect to $\bF$, such that $Y_0 = 1$ and $Y_t = \plim_{\tau \dda t} \hY_{\tau}$ holds for all $t \in [0, T]$. Fix $s \in [0, T]$, $t \in [s, T]$, $A \in \F_s$ and a strictly positive $X \in \X$. For all $n \in \Natural$, let $s^n \dfn (1 - 1/n)s + T/n$ and $t^n \dfn (1 - 1/n)t + T/n$. For all $n \in \Natural$, and since $A \in \F_s$, we have $\expec[(\hY_{t^n} X_{t^n} / (\hY_{s^n} X_{s^n})) \indic_A] \leq \prob[A]$ by \eqref{eq: pre-deflator property}. As $X$ is \lzcad, Fatou's lemma gives $\expec[(Y_{t} X_{t} / (Y_{s} X_{s})) \indic_A] \leq \prob[A]$. Since $A \in \F_s$ was arbitrary we obtain $\expec[Y_{t} X_{t} / (Y_{s} X_{s}) \such \F_s ] \leq 1$ for all strictly positive $X \in \X$. We have to show that the last inequality actually holds also for all $X \in \X$, not necessarily strictly positive. Fix then $X \in \X$ and let $X^n \dfn (1/n) + (1 - 1/n) X$ for all $n \in \Natural$; then, $X^n \in \X$ and $X^n$ is strictly positive. It follows that $\expec[Y_{t} X^n_{t} / (Y_{s} X^n_{s}) \such \F_s] \leq 1$ for all $n \in \Natural$. Now, $\liminf_{n \to \infty} (X^n_t / X^n_s) = (X_t / X_s) \indic_{\set{X_s > 0}} + \indic_{\set{X_s = 0, \, X_t = 0}} + \infty \indic_{\set{X_s = 0, \, X_t > 0}}$. As $\expec[ \liminf_{n \to \infty} (Y_{t} X^n_{t} / (Y_{s} X^n_{s})) \such \F_s] \leq 1$ holds by the conditional version of Fatou's lemma, and $\prob[Y_s > 0, \, Y_t > 0] = 1$, we obtain $\prob[X_s = 0, \, X_t > 0] = 0$. Then, using the division conventions mentioned in \S \ref{subsuc: gen supermarts}, we get $\expec[Y_{t} X_{t} / (Y_{s} X_{s}) \such \F_s ] \leq 1$ for all $X \in \X$. In other words, $Y X$ is a nonnegative generalized supermartingale with respect to $\bF$ for all $X \in \X$.

\smallskip

\noindent \underline{$(1) \Rightarrow (3)$}. The implication $(1) \Rightarrow (3)$ of Theorem \ref{thm: main static}, applied to the set $\C \dfn \set{X_T \such X \in \X}$ (which is assumed closed) implies that there exists $\hX \in \X$ such that $\expec[X_T / \hX_T] \leq 1$ for all $X \in \X$. We shall show that $X / \hX$ is a nonnegative generalized supermartingale with respect to $\bF$ for all $X \in \X$. The proof of implication $(1) \Rightarrow (2)$ above shows that $\expec[X_t / \hX_t] \leq 1$ for all $X \in \X$ and $t \in [0, T]$; in particular, $\hX$ is strictly positive. Using the notation of the proof implication $(1) \Rightarrow (2)$, it is clear that $\hX = 1 / \hY$. Then, the result follows directly from \eqref{eq: pre-deflator property}.

\smallskip

\noindent \underline{$(3) \Rightarrow (2)$}. Assume that $\hX$ exists and set $\hY \dfn 1 / \hX$. A priori, $\hY$ is not necessarily \lzcadlag. However, passing to $Y$ as in the proof of implication $(1) \Rightarrow (2)$ above and following the rest of the argument, we can conclude the existence of a generalized supermartingale deflator. 

\smallskip

\noindent \underline{$(2) \Rightarrow (1)$}. Pick $Y$ with the properties of statement (2). For all $\ell \in \Real_+$, we have the inequality $\ell \sup_{X \in \X} \prob[Y_T X_T > \ell] \leq \sup_{X \in \X} \expec[Y_T X_T] \leq  1$. Therefore, the set $\set{Y_T X_T \such X \in \X}$ is \pbdd. Since $Y_T \in \lzpp$, $\set{X_T \such X \in \X}$ is \pbdd.

\medskip

Finally, we establish that if $Y$ is a process satisfying condition (2) of Theorem \ref{thm: main dynamic}, all wealth processes in $\X$ are \lzcadlag. Pick $X \in \X$. Let $X' = (1 + X)/2$; then $X' \in \X$ and $X'$ is strictly positive. It follows that $Y X'$ is a strictly positive generalized supermartingale with respect to $\bF$. According to Proposition \ref{prop: cadlag modif}, $\plim_{t \dua \tau} (Y_t X'_t)$ exists for all $\tau \in \, ]0, T]$; as $\plim_{t \dua \tau} Y_t$ also exists and is an element of $\lzpp$, we obtain that $\plim_{t \dua \tau} X'_t$ exists for all $\tau \in\, ]0, T]$. This is equivalent to saying that $\plim_{t \dua \tau} X_t$ exists for all $\tau \in \, ]0, T]$. Since $X$ is already \lzcad, we conclude that $X$ is \lzcadlag.
\end{proof}

\bibliographystyle{siam}
\bibliography{gsd}
\end{document}